\newtheoremstyle{case}{}{}{}{}{}{:}{ }{}
\newtheorem{theorem}{Theorem}[section]
\newtheorem{definition}[theorem]{Definition}
\newtheorem{example}[theorem]{Example}
\newtheorem{proposition}[theorem]{Proposition}
\newtheorem{remark}[theorem]{Remark}
\numberwithin{subcase}{case}
\begin{document}
  
  \label{'ubf'}  
\setcounter{page}{1}                                 

\markboth {\hspace*{-9mm} \centerline{\footnotesize \sc
         A note on MDS Property of Circulant Matrices
         }
                 }
                { \centerline                           {\footnotesize \sc                     
      T. Chatterjee and A. Laha } \hspace*{-9mm}              
               }

\vspace*{-2cm}

\begin{center}
{ 
       { \textbf {
       A Note on MDS Property of Circulant Matrices
                               }
       }
\\

\medskip
{\sc Tapas Chatterjee }\\
{\footnotesize Indian Institute of Technology Ropar, Punjab, India.
}\\
{\footnotesize e-mail: {\it tapasc@iitrpr.ac.in}}
\medskip

{\sc Ayantika Laha }\\
{\footnotesize Indian Institute of Technology Ropar, Punjab, India.
}\\
{\footnotesize e-mail: {\it 2018maz0008@iitrpr.ac.in}}
\medskip
}
\end{center}

\thispagestyle{empty} 
\vspace{-.4cm}

\hrulefill

\begin{abstract}  
In 2014, Gupta and Ray proved that the  circulant involutory matrices over the finite field $\mathbb{F}_{2^m}$ can not be maximum distance separable (MDS). This non-existence also extends to circulant orthogonal matrices of order $2^d \times 2^d$ over finite fields of characteristic $2$. These findings inspired many authors to generalize the circulant property for constructing lightweight MDS matrices with practical applications in mind.
Recently, in $2022,$ Chatterjee and Laha initiated a study of circulant matrices by considering semi-involutory and semi-orthogonal properties.  Expanding on their work, this article delves into circulant matrices possessing these characteristics over the finite field $\mathbb{F}_{2^m}.$ Notably, we establish a correlation between the trace of associated diagonal matrices and the MDS property of the matrix. We prove that this correlation holds true for even order semi-orthogonal matrices and semi-involutory matrices of all orders. Additionally, we provide examples that for circulant, semi-orthogonal matrices of odd orders over a finite field with characteristic 2, the trace of associated diagonal matrices may possess non-zero values.

\end{abstract}
\hrulefill

{\textbf{Keywords}: Circulant Matrices, MDS Matrices, Semi-involutory Matrices, Semi-orthogonal Matrices.}

{\small \textbf{2020 Mathematics Subject Classification.} Primary: 12E20, 15B10, 94A60 ; Secondary: 15B05 }.\\

\vspace{-.37cm}
\section{Introduction}
Maximum Distance Separable (MDS) matrices have gained significant attention since their implementation in the block cipher SHARK \cite{RDPB} in $1996$. These matrices play a vital role in ensuring optimal diffusion in the diffusion layer of the cipher, thereby enhancing the cipher's resilience against both differential and linear cryptanalysis. As a consequence, many block ciphers such as SQUARE \cite{DKR}, AES \cite{DR}, Twofish \cite{SKWWHF}, and hash functions like PHOTON \cite{GPP}, Whirlpool \cite{BR} incorporate MDS matrices for enhancing the overall security of these cryptographic systems.
 
Two primary approaches for constructing MDS matrices are recursive and non-recursive. In recursive construction, we mainly start with a sparse matrix $A$ of order $n \times n$ with entries from a finite field such that $A^n$ is an MDS matrix. Notably, the block cipher LED \cite{GPPR} and the hash function PHOTON use recursive MDS matrices, which are constructed from companion matrices. On the other hand, in non-recursive constructions, various types of matrices are employed to directly construct MDS matrices. This involves leveraging the algebraic properties of the matrix and making appropriate choices for the entries from the finite fields.
 
 In $1996,$ Youssef {\it{et al.}} \cite{YMT} used a Cauchy matrix to construct an involutory MDS matrix. The involutory property holds significance in block cipher implementation, ensuring that if an MDS matrix $M$ is used for encryption, the same matrix is applied for decryption as well. So, it is of special interest to find efficient MDS matrices which are also involutory or orthogonal. Subsequently, in $2004,$ Lacan and Fimes \cite{LF} first constructed an MDS matrix using two Vandermonde matrices. Later in $2012,$ Sajadieh {\it{et al.}} \cite{SDMO} demonstrated that the Vandermonde based MDS matrix construction proposed by Lacan and Fimes in \cite{LF} could be transformed into an involutory matrix.
Subsequently, numerous researchers \cite{GR1, GPRS} have proposed diverse constructions of MDS matrices, leveraging the unique properties of Cauchy and Vandermonde matrices. 

Another class of matrices that gained a significant attention in the construction of MDS matrices is the circulant matrix. The block cipher AES used a $4 \times 4$ circulant MDS matrix in its diffusion layer, chosen for its lightweight characteristics. The matrix is circulant$(\alpha, \alpha+1, 1, 1)$ with entries from the finite field $\mathbb{F}_{2^8}$, where $\alpha$ is a root of the polynomial $x^8+x^4+x^3+x+1$. The lightweight nature of this matrix is attributed to two ones in the first row, together with the observation that a circulant matrix of order $n \times n$ can have at most $n$ distinct entries. However, it is important to note that this matrix is neither involutory nor orthogonal. In $2014,$ Gupta and Ray \cite{GR2, GR} proved that the circulant orthogonal matrix of order $2^d \times 2^d$ can not be MDS when their entries are from a finite field of characteristic $2$. Furthermore, they demonstrated that circulant matrices of order $n \geq 3$ can not simultaneously possess involutory and MDS properties over the finite field $\mathbb{F}_{2^m}.$  

These two non-existence results inspired authors to study circulant like matrices with MDS property in more detail. In \cite{SS1, SS2}, Sarkar {\it{et al.}} explored Toeplitz matrices, investigating their involutory, orthogonal, and MDS properties. Hankel matrices exhibiting these characteristics were examined in  \cite{GPRS}. In \cite{LS}, Liu and Sim generalized circulant matrices to cyclic matrices by changing the permutations and demonstrated the potential for left-circulant matrices to be MDS and involutory. In $2022,$ Chatterjee and Laha first initiated the study of circulant matrices with semi-involutory and semi-orthogonal properties with entries from a finite field in \cite{TA, TA1}. 
Subsequently, in \cite{TA2, TA3}, Chatterjee {\it{et al.}} studied cyclic matrices and $g$-circulant matrices, meticulously analyzing their involutory, orthogonal, semi-involutory, semi-orthogonal and MDS characteristics.

Many authors have continued the search for MDS matrices from finite fields to rings and modules. In $1995$, Zain and Rajan defined MDS codes over cyclic groups \cite{ZR} and Dong {\it{et al.}} characterized MDS codes over elementary Abelian groups \cite{DCG}. By considering a finite Abelian group as a torsion module over a PID, Chatterjee {\it{et al.}} proved some non-existent results of MDS matrices in $2022$ \cite{TAS}. In \cite{CL}, Cauchois {\it{et al.}} introduced $\theta$-circulant matrices over the quasi-polynomial ring. They proposed a construction for $\theta$-circulant almost-involutory MDS matrix over the quasi-polynomial ring.

\section{Contribution}

In Section \ref{sec:circ semi-ortho} of this article, we explore circulant semi-orthogonal matrices over the finite field $\mathbb{F}_{2^m}$. We establish that, for an order of $2^d \times 2^d,$ the associated diagonal matrices of a circulant semi-orthogonal matrix have trace zero. For other even orders, matrices are classified based on whether the order $k$ is congruent to $0/2 \pmod 4$. Specifically, for $k \equiv 0 \pmod 4$, we demonstrate that if the circulant semi-orthogonal matrix possesses the MDS property, the trace of the associated diagonal matrices is guaranteed to be zero. For other even orders, an additional condition on the diagonal matrices is necessary for the trace to be zero. This section also provides examples of circulant, semi-orthogonal MDS matrices of odd orders.
Similarly, in Section \ref{sec:circ semi-inv}, we present analogous results for circulant semi-involutory matrices of order $n \geq 3$.

\section{Preliminaries} 
 In this section, we describe the notations and important definitions we use throughout the paper. 
 
We begin with some notations and definitions from \cite{MS}.
Let $\mathbb{F}_q$ denote a finite field with $q$ elements where $q$ is power of a prime $p$. Let $\mathcal{C}$ be an $[n, k, d]$ linear error correcting code over the finite field $\mathbb{F}_q$ with length $n$, dimension $k$, and minimum Hamming distance $d$. The code $\mathcal{C}$ is a $k$ dimensional subspace of $\mathbb{F}_q^n$. The generator matrix $G$ of $\mathcal{C}$ is a $k \times n$ matrix with the standard form $[I|A]$, where $I$ is a $k \times k$ identity matrix and $A$ is $k \times n-k$ matrix.  The Singleton bound states that, for a $[n,k,d]$ code, $n-k \geq d-1$. An $[n,k,n-k+1]$ code is called a maximum distance separable (MDS) code. Another definition of an MDS code in terms of a generator matrix is the following.

\begin{definition}
An $[n,k,d]$ code $\mathcal{C}$ with the generator matrix $G=[I|A]$, where $A$ is a $k \times (n-k)$ matrix, is MDS if and only if every $i \times i$ submatrix of $A$ is non-singular, $i=1,2,\hdots,\text{min}(k,n-k)$.
\end{definition}
 This definition of MDS code gives the following characterization of an MDS matrix.

\begin{definition}
A square matrix $A$ is said to be MDS if every square submatrix of $A$ is non-singular.
\end{definition}

MDS matrices with easily implementable inverses  play a crucial role in the decryption layer of an SPN based block cipher. Therefore, either the involutory or orthogonal property of MDS matrices is essential. Here $A^{-1}$ denotes the inverse of $A$, $A^{T}$ denotes the transpose of $A$ and $I$ is the identity matrix.

\begin{definition}\label{semi-ortho def}
A square matrix $A$ is said to be involutory if $A^2=I$ and orthogonal if $AA^T=A^TA=I$. Also the trace of a square matrix is the sum of the elements in the main diagonal.
\end{definition}

In $2012,$ Fielder {\it{et al.}} generalized the orthogonal property of matrices to semi-orthogonal in \cite{FH}. The definition of a semi-orthogonal matrix is as follows.

\begin{definition}
A non-singular matrix $M$ is semi-orthogonal if there exist non-singular diagonal matrices $D_1$ and $D_2$ such that
$M^{-T}=D_1MD_2$, where $M^{-T}$ denotes the transpose of the matrix $M^{-1}$.
\end{definition}
We refer to the matrices $D_1$ and $D_2$ in Definition \ref{semi-ortho def} as associated diagonal matrices for the semi-orthogonal matrix $A.$

Following that, in $2021,$ Cheon {\it{et al.}} \cite{CCK}  defined semi-involutory matrices as a generalization of the involutory matrices. The definition of a semi-involutory matrix is as follows.
\begin{definition}\label{semi-inv def}
A non-singular matrix $M$ is said to be semi-involutory if there exist non-singular diagonal matrices $D_1$ and $D_2$ such that $M^{-1} = D_1MD_2$.  
\end{definition}

We refer to the matrices $D_1$ and $D_2$ in Definition \ref{semi-inv def} as associated diagonal matrices for the semi-involutory matrix $A$.

As previously mentioned, circulant matrices find application in the diffusion layer. In this context, we now provide definitions for circulant matrices and their generalizations. Let $A$ be an $n\times n$ matrix. The $i$-th row of $A$ is denoted by $R_i$ for $0 \leq i \leq n-1$ and the $j$-th column as $C_j$ for $0 \leq j \leq n-1$. Furthermore, $A[i,j]$ denotes the entry at the intersection of the $i$-th row and $j$-th column. The definition of a circulant matrix is the following. 

\begin{definition}
The square matrix of the form $\begin{bmatrix}
c_0 & c_1 & c_2 & \cdots & c_{n-1}\\
c_{k-1} & c_0 & c_1 & \cdots & c_{n-2}\\
\vdots & \vdots & \vdots & \cdots & \vdots\\
c_1 & c_2 & c_3 & \cdots & c_0
\end{bmatrix}$ is said to be circulant matrix and denoted by $\mathcal{C}=$ circulant$(c_0 , c_1 , c_2 , \hdots , c_{n-1})$.
\end{definition}
 The entries of the circulant matrix $\mathcal{C}$can be expressed as $\mathcal{C}[i,j]=c_{j-i},$ where subscripts are calculated modulo $n$.

The determinant of a circulant matrix of order $n \times n$ is 
\begin{eqnarray}\label{det circulant}
 \det(\mathcal{C})=\prod_{j=0}^{n-1}(\sum_{l=0}^{n-1}c_l\omega_n^{jl}),
 \end{eqnarray}
where $\omega_n= e^{\frac{2\pi i}{n}} \in \mathbb{C}$. 

In \cite{TA}, Chatterjee {\it{et al.}} proved the following result for circulant semi-involutory matrices over a finite field.
\begin{theorem}\label{circ semi-inv}
Let $A$ be an $n \times n$ circulant matrix over a finite field. Then $A$ is semi-involutory if and only if there exist non-singular diagonal matrices $D_1,D_2$ such that $D_1^n=k_1I$ and $D_2^n=k_2I$ for non-zero scalars  $k_1,k_2$ in the finite field, and $A^{-1}=D_1AD_2$. 
\end{theorem}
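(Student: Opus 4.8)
The reverse implication is immediate: if such $D_1,D_2$ exist with $A^{-1}=D_1AD_2$, then $A$ is semi-involutory by Definition \ref{semi-inv def}, the conditions $D_1^n=k_1I$ and $D_2^n=k_2I$ being irrelevant to that direction. So the whole content lies in the forward implication, and the plan is to begin with an arbitrary witnessing pair $D_1=\mathrm{diag}(d_0^{(1)},\dots,d_{n-1}^{(1)})$, $D_2=\mathrm{diag}(d_0^{(2)},\dots,d_{n-1}^{(2)})$ satisfying $A^{-1}=D_1AD_2$ and to upgrade it to one whose $n$-th powers are scalar. First I would record that the inverse of a circulant matrix is again circulant, so that $A=\mathrm{circulant}(c_0,\dots,c_{n-1})$ and $A^{-1}=\mathrm{circulant}(b_0,\dots,b_{n-1})$. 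Writing $A^{-1}=D_1AD_2$ entrywise yields $b_{\ell}=d_i^{(1)}c_{\ell}\,d_{i+\ell}^{(2)}$ for all $i,\ell$ (indices mod $n$). Thus $b_\ell=0$ whenever $c_\ell=0$, and on the support $S=\{\ell:c_\ell\neq 0\}$ we get the basic rigidity: the product $d_i^{(1)}d_{i+\ell}^{(2)}=b_\ell/c_\ell=:r_\ell$ is independent of $i$.

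The key gadget is to multiply this relation over a full cycle of $i$. For fixed $\ell\in S$ we have $\prod_{i=0}^{n-1} d_i^{(1)}d_{i+\ell}^{(2)}=r_\ell^{\,n}$, while the left-hand side factors as $\det(D_1)\det(D_2)$ once the shifted product is reindexed; hence $r_\ell^{\,n}=\det(D_1)\det(D_2)$ is the same scalar for every $\ell\in S$. Dividing the defining relations for two exponents $\ell,\ell'\in S$ then shows that $d_{j}^{(2)}/d_{j-(\ell-\ell')}^{(2)}$ equals the constant $r_\ell/r_{\ell'}$, whose $n$-th power is $r_\ell^{\,n}/r_{\ell'}^{\,n}=1$. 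Therefore $(d_j^{(2)})^n$ is invariant under every shift in $S-S$, hence constant on each coset of the subgroup $H=\langle S-S\rangle\le\Z/n\Z$; the symmetric computation (equivalently, passing to the circulant $A^T$ with associated pair $(D_2,D_1)$) gives the same for $(d_i^{(1)})^n$. When $H=\Z/n\Z$—in particular whenever $A$ is MDS, since then $S$ is all of $\Z/n\Z$—there is a single coset and the original $D_1,D_2$ already satisfy $D_1^n=k_1I$ and $D_2^n=k_2I$.

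The main obstacle is the general case, where $H$ is proper and the $n$-th powers are only piecewise constant across the $[\Z/n\Z:H]$ cosets; the degenerate example $A=c_0I$ (with $S=\{0\}$) shows the original pair can genuinely fail the scalar condition, so some adjustment is unavoidable. I would resolve this using the gauge freedom in the witness: replacing $(D_1,D_2)$ by $(G_1D_1,D_2G_2)$ preserves $A^{-1}=D_1AD_2$ exactly when $G_1A^{-1}G_2=A^{-1}$, i.e. $g_i^{(1)}g_j^{(2)}=1$ whenever $b_{j-i}\neq 0$. Choosing $G_1$ constant on the $H$-cosets, with value $d^{(1)}_{i_0}/d^{(1)}_{i_a}$ on the coset $a$ of $i$ (for fixed representatives $i_a$ and a base coset representative $i_0$), rescales every $(d_i^{(1)})^n$ to the common value $(d^{(1)}_{i_0})^n$. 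The companion $G_2$ forced by the gauge condition is then pinned down, and its effect on $(d_j^{(2)})^n$ is computed through the single relation $d^{(1)}_{j-\ell_0}d^{(2)}_j=r_{\ell_0}$ at a fixed $\ell_0\in\mathrm{supp}(A^{-1})\subseteq S$, which collapses $(d_j^{(2)})^n$ to $r_{\ell_0}^{\,n}/(d^{(1)}_{i_0})^n$ simultaneously for all $j$. The crucial point—and the only place where a missing $n$-th root in $\F_{2^m}$ could have blocked the argument—is that the inter-coset rescaling factors are \emph{honest} $n$-th powers (ratios of genuine field elements $d^{(1)}_{i_a}$), so no nontrivial root extraction is ever required. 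This produces $D_1,D_2$ with $D_1^n=k_1I$ and $D_2^n=k_2I$, completing the forward implication.
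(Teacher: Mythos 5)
Your proof is correct, but there is nothing in this paper to compare it against: Theorem~\ref{circ semi-inv} is stated here as an imported result, attributed to \cite{TA}, and no proof of it appears in the present article. Judged on its own, your argument is complete and sound. The reverse implication is indeed trivial; for the forward direction, the identity $b_\ell=d_i^{(1)}c_\ell d_{i+\ell}^{(2)}$, the computation $r_\ell^{\,n}=\det(D_1)\det(D_2)$ for every $\ell$ in the support $S$, and the resulting invariance of $(d_j^{(2)})^n$ (and, symmetrically, of $(d_i^{(1)})^n$) under shifts from $S-S$ correctly dispose of the case $\langle S-S\rangle=\Z/n\Z$ with the original witnessing pair --- which already covers every MDS circulant, since there the support is everything. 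Your handling of the degenerate case is the genuinely delicate part, and you get it right: the gauge condition $g_i^{(1)}g_j^{(2)}=1$ on the support of $A^{-1}$ is the correct invariance group of the witness, the rescaling factors $d_{i_0}^{(1)}/d_{i_a}^{(1)}$ are honest field elements so no $n$-th root extraction is required (a real issue over $\F_{2^m}$, where $x\mapsto x^n$ need not be surjective), and the single relation $d_{j-\ell_0}^{(1)}d_j^{(2)}=r_{\ell_0}$ transports the scalar property from the adjusted $D_1$ to the adjusted $D_2$. The one point you should state explicitly rather than leave implicit is that $S$ is contained in a single coset of $H=\langle S-S\rangle$ (immediate from $S-S\subseteq H$), since that is exactly what makes the companion gauge $G_2$ well defined. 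With that sentence added, your argument stands as a self-contained proof of the quoted theorem.
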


The analogous result for circulant semi-orthogonal matrices over a finite field is as follows.
\begin{theorem}\label{circ semi-ortho}
$A$ be an $n \times n$ circulant matrix over a finite field. Then $A$ is semi-orthogonal if and only if there exist non-singular diagonal matrices $D_1$ and $D_2$ such that $D_1^n=k_1I$ and $D_2^n=k_2I$ for non-zero scalars  $k_1,k_2 \in \mathbb{F}$ and $A^{-T}=D_1AD_2$.
\end{theorem}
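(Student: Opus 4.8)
The backward implication is immediate: if $A^{-T} = D_1 A D_2$ for non-singular diagonal $D_1, D_2$, then $A$ is semi-orthogonal by definition, and the extra hypotheses $D_1^n = k_1 I$, $D_2^n = k_2 I$ are not even needed. So the content lies in the forward direction, and the plan is to show that from any witnessing pair one can manufacture a pair obeying the power conditions. Throughout I would use that the circulant matrices form a commutative algebra isomorphic to $\F[x]/(x^n-1)$, so that $A^{-1}$ and $A^T$, hence $A^{-T}$, are again circulant; write $A = \text{circ}(c_0,\dots,c_{n-1})$ and $A^{-1} = \text{circ}(d_0,\dots,d_{n-1})$.

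First I would translate the matrix identity into scalars. Writing $D_1 = \text{diag}(a_0,\dots,a_{n-1})$, $D_2 = \text{diag}(b_0,\dots,b_{n-1})$ and comparing the $(i,j)$ entries of $A^{-T} = D_1 A D_2$ gives, with $k = j-i \pmod n$, the system $d_{-k} = a_i\, c_k\, b_{i+k}$ for all $i,k$. This is exactly the system appearing in the semi-involutory case of Theorem \ref{circ semi-inv}, the only change being the reflected index $d_{-k}$ in place of $d_k$, which plays no role below. On the support $S = \{k : c_k \neq 0\}$ it forces, for each $k \in S$, the product $a_i b_{i+k} = d_{-k}/c_k =: e_k$ to be independent of $i$.

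Next I would extract the diagonal structure. Comparing the relations for two elements $k,k' \in S$ shows the ratio $b_m/b_{m+t}$ is a constant $\rho_t$ whenever $t \in S - S$, and $t \mapsto \rho_t$ extends to a homomorphism from the subgroup $G = \langle S - S\rangle \le \Z_n$ into $\F^\ast$. Since $|G|$ divides $n$ one gets $\rho_g^{\,n} = 1$, so $b_j^n$, and symmetrically $a_j^n$, is constant along each coset of $G$; call these coset-values $\beta_C$ and $\alpha_C$. The obstruction is that for the given pair these powers may differ between cosets: when $A$ is not MDS the support can sit inside a proper coset, so $G \neq \Z_n$ and several cosets genuinely occur, and then $D_1, D_2$ themselves need not satisfy the power conditions.

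To finish I would exploit the gauge freedom of the system. Because the bipartite incidence $a_i \leftrightarrow b_{i+k}$ $(k\in S)$ has connected components indexed exactly by the cosets of $G$, any solution may be rescaled by one scalar $t_C$ per coset via $a_j \mapsto t_{C(j)} a_j$, $b_j \mapsto t_{C(j)}^{-1} b_j$, preserving $A^{-T} = D_1 A D_2$. The key computation is that $\alpha_C \beta_C = e_k^{\,n}$ for any $k \in S$, a quantity not depending on $C$; hence $\alpha_C\beta_C$ is globally constant. Fixing a reference coset $C_0$ and taking $t_C = b_{c_0(C)}/b_{c_0(C_0)}$, a bona fide field element so that no $n$-th roots must be extracted, simultaneously normalizes $b_j^n \equiv \beta_{C_0} =: k_2$ and, using the constancy of $\alpha_C\beta_C$, $a_j^n \equiv k_1$. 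The resulting diagonal matrices are the required $D_1, D_2$. I expect the main obstacle to be precisely this last normalization in the non-MDS case: one must verify that the only genuinely free parameters are the per-coset scalars and that $\alpha_C\beta_C$ is coset-independent, so that both power conditions can be met at once without leaving the field.
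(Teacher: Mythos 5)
The paper never proves Theorem \ref{circ semi-ortho}: it is quoted from \cite{TA} as the semi-orthogonal analogue of Theorem \ref{circ semi-inv}, so there is no in-paper argument to compare you against, and I can only assess your proposal on its own terms. On those terms it is correct. The backward implication is indeed vacuous. For the forward direction, the reduction to the scalar system $d_{-k}=a_i c_k b_{i+k}$, the observation that $a_i b_{i+k}=e_k$ is $i$-independent on the support $S$, the homomorphism $t\mapsto\rho_t$ on $G=\langle S-S\rangle$ with $\rho_t^{\,n}=1$ (via $|G|\mid n$), and the per-component rescaling all check out; in particular your normalization uses only ratios of existing entries, so no $n$-th roots need to be extracted and the argument never leaves $\F$ --- this is the one place where a careless version of the proof would break.

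Two blemishes, neither fatal. First, since $S-S\subseteq G$, the support $S$ lies in a single coset $k_0+G$, so the $b$-nodes of the component whose $a$-nodes form the coset $C$ constitute the shifted coset $C+k_0$; your notation $t_{C(j)}$ conflates the two, and the globally constant product is $\alpha_C\beta_{C+k_0}=e_k^{\,n}$ rather than $\alpha_C\beta_C$. The bijection $C\mapsto C+k_0$ renders this harmless, but the rescaling should be phrased as one scalar per connected component (multiplying its $a$-entries, dividing its $b$-entries), not per coset of the raw index $j$. Second, the aside that $G\neq\Z_n$ ``when $A$ is not MDS'' is only a heuristic and is not load-bearing; what matters is solely whether $S-S$ generates $\Z_n$. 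It is also worth recording that in the generic case ($G=\Z_n$, witnessing pair unique up to a scalar) there is a shorter route: conjugating $A^{-T}=D_1AD_2$ by the cyclic shift $P$ shows $(PD_1P^{-1},PD_2P^{-1})$ is another witnessing pair, whence $d_{i+1}=\lambda d_i$ with $\lambda^n=1$ and the scalar-power conclusion is immediate; your component analysis is exactly the extra work needed when that uniqueness fails.
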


\section{\bf Circulant matrices with MDS and semi-orthogonal properties \label{sec:circ semi-ortho} }

In \cite{GR2,GR}, Gupta {\it{et al.}} proved that circulant orthogonal matrices of order $2^d \times 2^d$ can not be MDS over a finite field of characteristic $2$. After that, in $2023$, Chatterjee {\it{et al.}} \cite{TA} studied semi-involutory and semi-orthogonal properties of circulant matrices.
They showed that in a circulant semi-orthogonal matrix of order $n \times n$, the $n$-th power of the associated diagonal matrices are scaler matrices.

Leveraging this property, we establish that for circulant semi-orthogonal matrices of order $2^d \times 2^d$, the trace of the associated diagonal matrices are zero over a finite field of characteristic $2$.
 
 \begin{proposition}\label{tr 0 2^d s.o}
 Let $A$ be a circulant, semi-orthogonal matrix of order $2^d \times 2^d$ over the finite field $\mathbb{F}_{2^m}$ with associated diagonal matrices $D_1$ and $D_2$. Then trace of  $D_1$ and $D_2$ are zero.
 \end{proposition}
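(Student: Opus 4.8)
The plan is to exploit the interplay between the scalar-power condition on the associated diagonal matrices supplied by Theorem \ref{circ semi-ortho} and the additivity of the Frobenius endomorphism in characteristic $2$. First I would invoke Theorem \ref{circ semi-ortho} with $n = 2^d$ to obtain that the non-singular diagonal matrices $D_1, D_2$ satisfy $D_1^{2^d} = k_1 I$ and $D_2^{2^d} = k_2 I$ for non-zero scalars $k_1, k_2 \in \mathbb{F}_{2^m}$. Writing $D_1 = \operatorname{diag}(\lambda_0, \ldots, \lambda_{2^d - 1})$, the first relation is entrywise and forces $\lambda_i^{2^d} = k_1$ for every $i$; similarly the diagonal entries $\mu_0,\ldots,\mu_{2^d-1}$ of $D_2$ satisfy $\mu_i^{2^d} = k_2$ for every $i$.

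Next I would compute the $2^d$-th power of the trace. Since $\mathbb{F}_{2^m}$ has characteristic $2$, the map $x \mapsto x^{2^d}$ is the $d$-fold iterate of the Frobenius automorphism and is therefore additive; the iterated freshman's dream gives $\left(\sum_i \lambda_i\right)^{2^d} = \sum_i \lambda_i^{2^d}$. Hence
\begin{equation*}
\operatorname{tr}(D_1)^{2^d} = \Big(\sum_{i=0}^{2^d-1} \lambda_i\Big)^{2^d} = \sum_{i=0}^{2^d-1} \lambda_i^{2^d} = 2^d\, k_1 = 0,
\end{equation*}
where the final equality holds because summing any fixed element of $\mathbb{F}_{2^m}$ an even number of times yields $0$ in characteristic $2$.

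Finally, since $\mathbb{F}_{2^m}$ is a field and hence has no non-zero nilpotent elements, the identity $\operatorname{tr}(D_1)^{2^d} = 0$ forces $\operatorname{tr}(D_1) = 0$. Repeating the identical computation with $\mu_i$ and $k_2$ in place of $\lambda_i$ and $k_1$ yields $\operatorname{tr}(D_2)^{2^d} = 2^d k_2 = 0$ and therefore $\operatorname{tr}(D_2) = 0$, which completes the argument.

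I do not anticipate a serious obstacle here: the substantive content is already packaged inside Theorem \ref{circ semi-ortho}, and the remaining work is the clean observation that the additivity of the $2^d$-power map converts the scalar-power condition directly into a statement about the trace. The only points requiring care are the characteristic-$2$ arithmetic — that $2^d$ equal summands cancel, and that the $2^d$-power map on a field is injective so that vanishing of $\operatorname{tr}(D_1)^{2^d}$ forces vanishing of $\operatorname{tr}(D_1)$ — neither of which is difficult. It is worth noting that the MDS hypothesis plays no role for this order, consistent with the unconditional statement claimed for $2^d \times 2^d$ matrices in the contribution section.
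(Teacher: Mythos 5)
Your argument is correct and follows essentially the same route as the paper's proof: invoke Theorem \ref{circ semi-ortho} to get $D_1^{2^d}=k_1I$ and $D_2^{2^d}=k_2I$, note that the trace of a scalar matrix of order $2^d$ vanishes in characteristic $2$, and use the additivity of the $2^d$-power map plus the absence of nilpotents in a field to conclude that the traces themselves vanish. Your write-up is, if anything, slightly more explicit about the final injectivity step than the paper's.
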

 
 \begin{proof}
 Let $A$ be a circulant semi-orthogonal matrix with associated diagonal matrices $D_1$ and $D_2$. Then $A^{-T} = D_1AD_2,$ where $D_1$ and $D_2$ are non-singular diagonal matrices. Let $D_1=$ diagonal$(d_0,d_1,d_2,\hdots,d_{2^d-1})$ and $D_2=$ diagonal$(e_0,e_1,e_2,\hdots,e_{2^d-1})$. These two diagonal matrices also satisfy $D_1^{2^d}=k_1I$ and $D_2^{2^d}=k_2I$ for some non-zero scalers $k_1, k_2$ of the finite field by Theorem \ref{circ semi-ortho}. This implies trace$(D_1^{2^d})=2^dk_1=0$ and  trace$(D_2^{2^d})=2^dk_2=0$. This leads to the expressions: $$d_0^{2^d}+d_1^{2^d}+d_2^{2^d}+\cdots+d_{2^d-1}^{2^d}=(d_0+d_1+d_2+\cdots+d_{2^d-1})^{2^d}=0$$ and  $$e_0^{2^d}+e_1^{2^d}+e_2^{2^d}+\cdots+e_{2^d-1}^{2^d}=(e_0+e_1+e_2+\cdots+e_{2^d-1})^{2^d}=0.$$
 
Thus trace$(D_1)$ and trace$(D_2)$ are zero.
\end{proof}

For circulant, semi-orthogonal matrices of even orders other than powers of $2$, the following two theorems establish a significant relationship between the MDS property and the trace of the associated diagonal matrices.

\begin{theorem}
Let $A$ be a circulant, semi-orthogonal matrix of order $2^in \times 2^in$ over the finite field $\mathbb{F}_{2^m}$ with associated diagonal matrices $D_1$ and $D_2$, where $i >1$ and $n \geq 3,$ an odd integer.
Then $A$ is MDS implies both the matrices $D_1$ and $D_2$ have trace zero.
\end{theorem}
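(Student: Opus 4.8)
The plan is to show that the MDS hypothesis forces the diagonal entries of $D_1$ (and of $D_2$) to form a geometric progression, after which both traces vanish for parity reasons in characteristic $2$. Write $N=2^in$, $A=\mathrm{circulant}(c_0,\dots,c_{N-1})$, $D_1=\mathrm{diagonal}(d_0,\dots,d_{N-1})$ and $D_2=\mathrm{diagonal}(e_0,\dots,e_{N-1})$. Since $A$ is circulant and invertible, $A^{-1}$ is again circulant, say $A^{-1}=\mathrm{circulant}(b_0,\dots,b_{N-1})$, so that $A^{-T}[i,j]=b_{i-j}$ while $(D_1AD_2)[i,j]=d_i\,c_{j-i}\,e_j$, all subscripts read modulo $N$. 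Comparing entries in $A^{-T}=D_1AD_2$ and setting $k=j-i$ collapses the matrix identity into the single family of scalar relations
\begin{equation*}
d_i\,c_k\,e_{i+k}=b_{-k}\qquad(0\le i,k\le N-1).
\end{equation*}

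First I would record the only place the MDS property enters: every entry of $A$ is nonzero, because each $1\times1$ submatrix must be nonsingular. Hence $c_k\neq0$ for every $k$, and so for each fixed $k$ the quantity $d_ie_{i+k}=b_{-k}/c_k$ is independent of $i$. Taking $k=0$ and $k=1$ gives $d_ie_i=\lambda_0$ and $d_ie_{i+1}=\lambda_1$ for all $i$, with $\lambda_0,\lambda_1$ nonzero (the $d_i,e_j$ being nonzero). Dividing the two relations yields $e_{i+1}/e_i=\lambda_1/\lambda_0$ for all $i$, so $D_2$ is geometric; substituting back, $d_{i+1}/d_i=\lambda_0/\lambda_1=:\alpha$ is constant, i.e. $d_i=d_0\alpha^i$ and $e_i=e_0\alpha^{-i}$.

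Next I would pin down $\alpha$. Reading the recurrence $d_{i+1}=\alpha d_i$ around the cycle forces $\alpha^N=1$. Crucially $\alpha=d_1/d_0\in\mathbb{F}_{2^m}^{*}$, a group of odd order $2^m-1$; since $\alpha^{2^in}=1$, the order of $\alpha$ is an odd divisor of $2^in$, hence a divisor of $n$, so $\alpha^n=1$. Splitting the $N=2^in$ summands into $2^i$ consecutive blocks of length $n$ then gives
\begin{equation*}
\mathrm{trace}(D_1)=d_0\sum_{j=0}^{N-1}\alpha^{j}=d_0\cdot 2^{i}\sum_{j=0}^{n-1}\alpha^{j}=0,
\end{equation*}
the last equality because $2^i=0$ in $\mathbb{F}_{2^m}$; the identical computation with $\alpha^{-1}$ gives $\mathrm{trace}(D_2)=0$. (Equivalently $\sum_{j=0}^{N-1}\alpha^{j}=(\alpha^{N}-1)/(\alpha-1)=0$ when $\alpha\neq1$, while for $\alpha=1$ it equals $Nd_0=0$ as $N$ is even; the contrast with odd $N$, where $\alpha=1$ forces $\mathrm{trace}(D_1)=Nd_0\neq0$, is exactly what underlies the promised odd-order examples.)

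The main obstacle is the middle step: rigorously extracting the geometric structure from the matrix identity. Everything hinges on using $c_0\neq0$ and $c_1\neq0$ simultaneously—available only through MDS—together with the circulant form of $A^{-1}$, and on checking that the wrap-around relation at $i=N-1$ is consistent with $\alpha^N=1$ rather than over-determining the system. Once the progression is in hand the vanishing of the trace is purely a characteristic-$2$ parity statement, valid for every even order; thus the content of the theorem lies entirely in forcing the geometric structure, and the hypothesis $i>1$ (equivalently $4\mid N$) serves only to classify the order, the genuinely different behaviour occurring for odd $N$.
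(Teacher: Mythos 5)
Your proof is correct, but it follows a genuinely different route from the paper's. The paper works with the product identity $AD_2A^{T}D_1=I$: it extracts the vanishing of the off-diagonal entries along each odd diagonal, sums the resulting batches of equations so that everything factors as $(a_0+a_2+\cdots+a_{k-2})(a_1+a_3+\cdots+a_{k-1})\,\mathrm{trace}(D_2)=0$, and then uses the MDS hypothesis through the determinant formula for the two circulant submatrices of order $k/2$ built from the even- and odd-indexed entries to rule out the first two factors; the assumption $4\mid k$ is what makes that bookkeeping close up, and the whole computation is repeated with $A^{-1}A=I$ for $D_1$. You instead exploit the fact that $A^{-1}$ is itself circulant, so the defining relation $A^{-T}=D_1AD_2$, read entrywise, forces $d_ie_{i+k}$ to be independent of $i$ whenever $c_k\neq0$; taking $k=0,1$ makes $D_1$ and $D_2$ geometric progressions with ratio $\alpha\in\mathbb{F}_{2^m}^{*}$ satisfying $\alpha^{N}=1$, and the trace becomes a geometric sum that vanishes for every even $N$ in characteristic $2$. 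This is more elementary and strictly stronger: it consumes only $c_0c_1\neq0$ (a tiny fragment of MDS), it handles both diagonal matrices at once, it covers all even orders uniformly --- in particular it would make the paper's separate theorem for order $2n$ with its extra non-periodicity hypothesis redundant --- and it explains structurally why the odd-order examples can have nonzero trace (there $\alpha=1$ is admissible and $N$ is odd). The one step you use silently and should record is that $A^{-1}$ is circulant; this follows from Cayley--Hamilton, since $A^{-1}$ is a polynomial in $A$ and hence in the cyclic shift matrix.
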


\begin{proof}
Let $2^in=k$ and $A=$ circulant$(a_0,a_1,a_2,\hdots,a_{k-1})$. Since $A$ is semi-orthogonal, we have $A^{-T} = D_1AD_2,$ where $D_1$ and $D_2$ are non-singular diagonal matrices given by $D_1=$ diagonal$(d_0,d_1,d_2,\hdots,d_{k-1})$ and $D_2=$ diagonal$(e_0,e_1,e_2,\hdots,e_{k-1})$. Let $A$ be an MDS matrix, then all the submatrices of $A$ have determinant non-zero.
Using the identity $AA^{-1}=I$, we have $AD_2A^{T}D_1=I$. Let $M=AD_2A^{T}D_1$. Since all non-diagonal entries of $M$ are zero, we can derive the following set of equations from the entries $M[0,1], M[1,2], M[2,3], \hdots, M[k-2,k-1], M[k-1,0]$:
\begin{align*}
(\sum_{i=0}^{k-1}a_ia_{i+1}&e_{i+1})d_1=0\\
(\sum_{i=0}^{k-1}a_ia_{i+1}&e_{i+2})d_2=0\\
&\vdots\\
(\sum_{i=0}^{k-1}a_ia_{i+1}&e_{i+(k-1)})d_{k-1}=0\\
(\sum_{i=0}^{k-1}a_ia_{i+1}&e_{i+k})d_0=0.
   \end{align*}
Here all the suffixes are calculated modulo $k$. Since $d_i$'s are non-zero, these equations reduce to the following:
\begin{eqnarray*}
(\sum_{i=0}^{k-1}a_ia_{i+1}e_{i+1})=0, (\sum_{i=0}^{k-1}a_ia_{i+1}e_{i+2})=0, \hdots,(\sum_{i=0}^{k-1}a_ia_{i+1}e_{i+k})=0
\end{eqnarray*}
Adding these equations we get
\begin{eqnarray}
(\sum\limits_{i=0}^{k-1}a_ia_{i+1})(e_0+e_1+\cdots+ e_{k-1})=0.
\end{eqnarray}

Next, consider the following set of entries of the matrix $M$: 
$M[0,3],$~ $M[1,4],$~ $M[2,5],$ $\hdots,$ $M[k-3,0],$ $M[k-2,1],$ $M[k-1,2]$. From these entries, we get the following set of equations:
\begin{align*}
(\sum_{i=0}^{k-1}a_ia_{i+3}&e_{i+3})d_3=0\\
(\sum_{i=0}^{k-1}a_ia_{i+3}&e_{i+4})d_4=0\\
(\sum_{i=0}^{k-1}a_ia_{i+3}&e_{i+5})d_5=0\\
&\vdots\\
(\sum_{i=0}^{k-1}a_ia_{i+3}&e_{i+k+1})d_1=0\\
(\sum_{i=0}^{k-1}a_ia_{i+3}&e_{i+k+2})d_2=0.
   \end{align*}
Here all the suffixes are calculated modulo $k$. Similarly as before, using that $d_i$'s are non-zero and adding these equations, we get
\begin{eqnarray}
(\sum_{i=0}^{k-1}a_ia_{i+3})(e_0+e_1+\cdots+ e_{k-1})=0.
\end{eqnarray}
Continuing this process to cover all the odd positions of the first row till the position $M[0,\frac{k}{2}]$.
 
Consider the entries at positions $ M[0,\frac{k}{2}-1], M[1,\frac{k}{2}], M[2,\frac{k}{2}+1], \hdots, M[\frac{k}{2}+1,0], M[\frac{k}{2}+2,1], \cdots, M[k-1,\frac{k}{2}-2]$. From these entries, we get the equation 
\begin{eqnarray}
(\sum\limits_{i=0}^{k-1}a_ia_{i+{\frac{k}{2}-1}})(\sum\limits_{i=0}^{k-1}e_i)=0,
\end{eqnarray} 
where the suffixes are calculated modulo $k$.

Adding the following $\frac{k}{4}$ equations 
\begin{eqnarray*}
(\sum\limits_{i=0}^{k-1}a_ia_{i+1})(\sum\limits_{i=0}^{k-1}e_i)=0, (\sum\limits_{i=0}^{k-1}a_ia_{i+3})(\sum\limits_{i=0}^{k-1}e_i)=0, \hdots, (\sum\limits_{i=0}^{k-1}a_ia_{i+{\frac{k}{2}-1}})(\sum\limits_{i=0}^{k-1}e_i)=0,
\end{eqnarray*}
we get
\begin{eqnarray}\label{final eq1}
(a_0+a_2+\cdots+a_{k-2})(a_1+a_3+\cdots+a_{k-1})(e_0+e_1+\cdots+ e_{k-1})=0.
\end{eqnarray}

Given that $A$ is a circulant matrix of order $k \times k$, it has two circulant submatrices of order $\frac{k}{2}$ with the first row $(a_0,a_2,\hdots,a_{k-2})$ and $(a_1,a_3,\hdots,a_{k-1})$ respectively. According to Equation (\ref{det circulant})
, both $(a_0+a_2+\cdots+a_{k-2})$ and $(a_1+a_3+\cdots+a_{k-1})$ must be non-zero since $A$ is an MDS matrix. Therefore from Equation \ref{final eq1}, we have $(e_0+e_1+\cdots+ e_{k-1})=0$ and this implies trace$(D_2)=0$.

Similarly using the identity $A^{-1}A=I$ and following the same process, we will get trace$(D_1)=0$. 
\end{proof}

In the next result, we explore the case where the order of the matrix is an even number of the form $2n, n$ is an odd number. In this case, we need one additional condition on the entries of at least one of the associated diagonal matrix. Any diagonal matrix of even order  meeting this criterion is termed as non-periodic diagonal matrix. Specifically, we define a diagonal matrix $D=$ diagonal $(d_0,d_1,d_2,\hdots,d_{2n-1})$ as a non-periodic diagonal matrix, if the entries satisfy  $d_i \neq d_{i+n}, i=0,1,2, \hdots, n-1$.

\begin{theorem}
Let $A$ be a circulant, semi-orthogonal matrix of order $2n \times 2n, n \geq 3$ be an odd number, over $\mathbb{F}_{2^m}$ with associated diagonal matrices $D_1$ and $D_2$. If $A$ is an MDS matrix and at least one of the associated diagonal matrix is non-periodic, then trace of that non-periodic diagonal matrix is zero.
\end{theorem}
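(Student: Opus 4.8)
The plan is to follow the scheme of the preceding theorem, isolating the one place where $k=2n$ with $n$ odd genuinely differs from the case $4\mid k$. As there, semi-orthogonality gives $A^{-1}=D_2A^TD_1$, so $M:=AD_2A^TD_1=I$. Writing $A=\mathrm{circulant}(a_0,\dots,a_{k-1})$, reading off $M[r,s]=0$ for $r\ne s$, and cancelling the nonzero $d_s$, I get for every nonzero offset $\ell$ and every shift $c$ the relation $\sum_{i=0}^{k-1}a_ia_{i+\ell}e_{i+c}=0$ (all indices mod $k$). Summing over $c$ collapses the $e$'s to $E:=\sum_i e_i=\text{trace}(D_2)$, giving $\big(\sum_i a_ia_{i+\ell}\big)E=0$ for each $\ell\ne0$.

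Next I add these relations over the odd offsets, exactly as in the $4\mid k$ theorem. The new feature is that here $k/2=n$ is odd, so the middle offset $\ell=n$ is itself odd and is its own partner under $\ell\mapsto k-\ell$. Consequently, summing over $S=\{1,3,\dots,n-2\}$ and regrouping the resulting products by the parity of their indices no longer reproduces the full product $(a_0+a_2+\cdots+a_{k-2})(a_1+a_3+\cdots+a_{k-1})$: the terms coming from the self-paired offset $n$ are left over. I expect to obtain $(P+B)E=0$, where $P=(a_0+a_2+\cdots+a_{k-2})(a_1+a_3+\cdots+a_{k-1})$ and $B=\sum_{i=0}^{n-1}a_ia_{i+n}$ is this leftover middle term (absent when $4\mid k$). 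By the circulant determinant formula (Equation \ref{det circulant}) applied to the two order-$n$ circulant submatrices on the even and odd index sets, the MDS hypothesis forces both factors of $P$ to be nonzero, so $P\ne0$. Thus the whole problem reduces to showing $BE=0$, and controlling the middle term $B$ is the main obstacle.

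To kill $B$ I use precisely the middle-offset relations $\sum_i a_ia_{i+n}e_{i+c}=0$. Summing these over all $k$ shifts $c$ is useless, since $\sum_{i=0}^{k-1}a_ia_{i+n}=0$ in characteristic $2$ (each product occurs twice); the point is to fold modulo $n$ first. Indeed $a_{i+n}a_{i+2n}=a_{i+n}a_i$, so $b_i:=a_ia_{i+n}$ has period $n$, and each relation becomes $\sum_{i=0}^{n-1}b_if_{i+c}=0$ with $f_i:=e_i+e_{i+n}$. Here non-periodicity enters: $D_2$ non-periodic means $e_i\ne e_{i+n}$, i.e.\ every $f_i\ne0$, so the folded relations are the correct non-degenerate objects. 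Summing them over a full period $c=0,1,\dots,n-1$ and using $\sum_{j=0}^{n-1}f_j=E$ collapses everything to $BE=0$.

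Finally, adding $(P+B)E=0$ and $BE=0$ over $\mathbb{F}_{2^m}$ gives $PE=0$, and since $P\ne0$ this yields $E=\text{trace}(D_2)=0$. When instead $D_1$ is the non-periodic matrix, I run the identical computation starting from $A^{-1}A=I$; this produces the analogous relations with $e$ replaced by $d$ (a convolution rather than a correlation, but the parity bookkeeping is the same), and the non-periodicity of $D_1$ makes the middle-offset step go through, giving $\text{trace}(D_1)=0$. The only genuinely new ingredient compared with the $4\mid k$ case is the middle term $B$; the heart of the argument is showing it is annihilated by the trace, which is exactly what the middle-offset relations together with non-periodicity deliver.
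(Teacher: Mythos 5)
Your proposal follows essentially the same route as the paper: form $M=AD_2A^TD_1=I$, cancel the nonzero $d$'s, sum each off-diagonal offset's relations over all shifts to collapse the $e$'s to $\mathrm{trace}(D_2)$, combine the odd offsets $1,3,\dots,n$ (with the self-paired middle offset $n$ folded to only $n$ relations), and use the circulant determinant formula together with the MDS hypothesis to conclude that the product of the even- and odd-indexed entry sums is nonzero. Your explicit bookkeeping of the leftover middle term $B=\sum_{i=0}^{n-1}a_ia_{i+n}$ and the cancellation $(P+B)+B=P$ is just a more carefully written version of the paper's final summation, so the argument matches.
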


\begin{proof}
Let $A=$circulant$(a_0,a_1,a_2,\hdots,a_{2n-1})$. Since $A$ is semi-orthogonal, it satisfy $A^{-T}=D_1AD_2$, where $D_1$ and $D_2$ are non-singular diagonal matrices given by $D_1=$diagonal$(d_0,d_1,d_2,\hdots,d_{2n-1})$ and $D_2=$diagonal$(e_0,e_1,e_2,\hdots,e_{2n-1})$. Without loss of generality, we assume that $D_2$ is non-periodic diagonal matrix. Then $e_i \neq e_{i+n}, i=0,1,2, \hdots, n-1.$ Let $A$ be an MDS matrix.

Since $AA^{-1}=I$, we have $AD_2A^{T}D_1=I$. Let $AD_2A^{T}D_1=M$. All non-diagonal entries of $M$ are zero. Form the entries $M[0,1], M[1,2], M[2,3], \hdots, M[2n-2,2n-1],  M[2n-1,0]$ we get the following equations:
\begin{align*}
(\sum_{i=0}^{2n-1}a_ia_{i+1}&e_{i+1})d_1=0\\
(\sum_{i=0}^{2n-1}a_ia_{i+1}&e_{i+2})d_2=0\\
&\vdots\\
(\sum_{i=0}^{2n-1}a_ia_{i+1}&e_{i+(2n-1)})d_{2n-1}=0\\
(\sum_{i=0}^{2n-1}a_ia_{i+1}&e_{i+2n})d_0=0.
 \end{align*}
Here all the suffixes are calculated modulo $2n$.
Since $d_i$'s are non-zero, we can add these equations and get
\begin{eqnarray*}
(\sum_{i=0}^{2n-1}a_ia_{i+1})(e_0+e_1+\cdots+ e_{2n-1})=0.
\end{eqnarray*}

Continuing the similar process for the entries at positions $M[0,3],$ $M[1,4],$ $M[2,5],$ $\hdots,$ $M[2n-3,0]$ we get:
\begin{align*}
(\sum_{i=0}^{2n-1}a_ia_{i+3}&e_{i+3})d_3=0\\
(\sum_{i=0}^{2n-1}a_ia_{i+3}&e_{i+4})d_4=0\\
&\vdots\\
(\sum_{i=0}^{2n-1}a_ia_{i+3}&e_{i+(2n)})d_{0}=0\\
(\sum_{i=0}^{2n-1}a_ia_{i+3}&e_{i+2n+1})d_1=0\\
(\sum_{i=0}^{2n-1}a_ia_{i+3}&e_{i+2n+2})d_2=0.
\end{align*}
Here all the suffixes are calculated modulo $2n$. Adding these equations we get $$(\sum\limits_{i=0}^{2n-1}a_ia_{i+3})(\sum\limits_{i=0}^{2n-1}e_i )= 0.$$

Continue this process to cover all the odd positions of the first row, upto the position $M[0,n]$.
From the entries $M[0,n], M[1,n+1], M[2,n+2], \hdots,M[n-1,2n-1]$ we get:
\begin{equation}\label{last set}
    \begin{aligned}
(\sum_{i=0}^{n-1}a_ia_{i+n}&(e_i+e_{i+n}))d_n=0\\
(\sum_{i=0}^{n-1}a_ia_{i+n}&(e_{i+1}+e_{i+(n+1)}))d_{n+1}=0\\
&\vdots\\
(\sum_{i=0}^{n-1}a_ia_{i+n}&(e_{i+(n-1)}+e_{(i+n)+(n-1)}))d_{2n-1}=0.
\end{aligned}
  \end{equation}
Here all the suffixes are calculated modulo $2n$. Using the given conditions on $e_i$'s and $d_i$'s non-zero, we get $(\sum\limits_{i=0}^{2n-1}a_ia_{i+n})(\sum\limits_{i=0}^{2n-1}e_i )=0$.

 Note that, in Equation (\ref{last set}), we have $n$ number of equations, where the other sets of involve $2n$ equations each. 
 Finally, adding the following $\lceil \frac{n}{2} \rceil$ equations: $$(\sum\limits_{i=0}^{2n-1}a_ia_{i+1})(\sum\limits_{i=0}^{2n-1}e_i )=0, (\sum\limits_{i=0}^{2n-1}a_ia_{i+3})(\sum\limits_{i=0}^{2n-1}e_i )=0, \hdots, 
 (\sum\limits_{i=0}^{n-1}a_ia_{i+n})(\sum\limits_{i=0}^{2n-1}e_i )=0,$$ we obtain $$(a_0+a_2+\cdots+a_{2n-2})(a_1+a_3+\cdots+a_{2n-1})(e_0+e_1+\cdots+ e_{2n-1})=0.$$ Since $A$ is MDS, using the same argument as previous theorem, we get $(\sum\limits_{i=0}^{2n-1}e_i )=0$. This implies trace$(D_2)=0.$
 
Similarly using the identity $A^{-1}A=I,~ D_1$ is non-cyclic diagonal matrix, and following the same process, we will get trace$(D_1)=0$.
\end{proof}

For circulant, semi-orthogonal matrices of odd order, the following examples demonstrate the possibility of achieving the MDS property and the trace of the associated diagonal matrices are non-zero.
 \begin{example}
Consider the $3 \times 3$ matrix $A=$ circulant $(\alpha, \alpha+1, \alpha^2+\alpha),$ where $\alpha$ is a primitive element of the finite field $\mathbb{F}_{2^8}$ with the generating polynomial $x^8+x^4+x^3+x^2+1$. Note that, $A$ is semi-orthogonal since $A^{-T}=D_1AD_2,$ where $D_1=$ diagonal$(\alpha^7+\alpha^6+\alpha^5+\alpha, \alpha^7+\alpha^6+\alpha^5+\alpha, \alpha^7+\alpha^6+\alpha^5+\alpha)$ and $D_2=$ diagonal$(\alpha^6+\alpha^4+\alpha^3+\alpha, \alpha^6+\alpha^4+\alpha^3+\alpha, \alpha^6+\alpha^4+\alpha^3+\alpha)$. $A$ is also an MDS matrix.
 \end{example}
 \begin{example}
Consider the $5 \times 5$ matrix $A=$ circulant $(1,1+\alpha+\alpha^3,1+\alpha+\alpha^3,\alpha+\alpha^3,1+\alpha^3+\alpha^4+\alpha^7),$ where $\alpha$ is a primitive element of the finite field $\mathbb{F}_{2^8}$ with the generating polynomial $x^8+x^4+x^3+x^2+1$.  
Note that, $A$ is semi-orthogonal since $A^{-T}=D_1AD_2,$ where $D_1=$ diagonal$(\alpha^2+\alpha, \alpha^7+\alpha^2+1, \alpha^7+\alpha^6+\alpha^5+\alpha^4+\alpha^2,\alpha^5+\alpha^4+\alpha^3+\alpha^2,\alpha^6+\alpha^3+\alpha+1)$ and $D_2=$ diagonal$(\alpha^7+\alpha^6+\alpha^3+\alpha^2+\alpha+1, \alpha^7+\alpha^5+\alpha^3, \alpha^7+\alpha^5+\alpha^4+\alpha^2+1, \alpha^6+\alpha^5+\alpha^2, \alpha^7+\alpha^5+\alpha^4+\alpha^2+\alpha)$. $A$ is also an MDS matrix.

\end{example}
 \begin{remark}
We have classified circulant semi-orthogonal matrices over the finite field $\mathbb{F}_{2^m}$ into four distinct categories. Specifically, for odd orders, we provide examples of circulant semi-orthogonal matrices of orders $3 \times 3$ and $5 \times 5$ with the MDS property. For matrices of  order $2^d \times 2^d$, the trace of the associated diagonal matrices is zero. Additionally, for matrices of even order, where the order $k \equiv 0 \pmod 4$, the MDS property ensures that the trace of the associated diagonal matrices remains zero. Furthermore, when the order is even and congruent to $2 \pmod 4$, the MDS property together with non-periodic diagonal matrices results in a trace value of zero for the associated diagonal matrices.
\end{remark} 
 
 In the subsequent section, we explore circulant matrices with the semi-involutory property. Our objective is to determine whether similar outcomes persist under semi-involutory property or not.
 
 \section{\bf Circulant matrices with MDS and semi-involutory properties \label{sec:circ semi-inv} }

In \cite{GR2}, Gupta {\it{et al.}} proved that circulant involutory matrices of order $n \geq 3$ can not be MDS. In the subsequent results, we extend this characteristic to circulant semi-involutory matrices. In this direction, our first result demonstrates that, the trace of these associate diagonal matrices is also zero under certain conditions.

\begin{proposition}\label{tr 0 2^d si}
Let $A$ be a $2^d \times 2^d$ circulant, semi-involutory matrix over the finite field $\mathbb{F}_{2^m}$ with associated diagonal matrices $D_1$ and $D_2$. Then trace of $D_1$ and $D_2$ are zero.
\end{proposition}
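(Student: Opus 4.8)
The plan is to mirror the argument of Proposition \ref{tr 0 2^d s.o}, replacing the semi-orthogonal relation by the semi-involutory one. First I would invoke Theorem \ref{circ semi-inv}: since $A$ is a $2^d \times 2^d$ circulant semi-involutory matrix, there exist non-singular diagonal matrices $D_1, D_2$ with $A^{-1} = D_1 A D_2$ and, crucially, $D_1^{2^d} = k_1 I$ and $D_2^{2^d} = k_2 I$ for some non-zero scalars $k_1, k_2 \in \mathbb{F}_{2^m}$. This is exactly the structural input that made the semi-orthogonal case work, and Theorem \ref{circ semi-inv} supplies it verbatim in the semi-involutory setting.

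Next I would compute the trace of $D_1^{2^d}$ in two ways. On one hand, $D_1^{2^d} = k_1 I$ is a scalar matrix of size $2^d$, so its trace equals $2^d k_1$, which vanishes because the field has characteristic $2$. On the other hand, writing $D_1 = \mathrm{diagonal}(d_0, d_1, \dots, d_{2^d - 1})$, the trace of $D_1^{2^d}$ is $\sum_{i=0}^{2^d-1} d_i^{2^d}$. The key algebraic step is the Frobenius (``freshman's dream'') identity in characteristic $2$: since the $2^d$-th power map is additive, $\sum_{i} d_i^{2^d} = \bigl(\sum_{i} d_i\bigr)^{2^d}$.

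Combining the two computations gives $\bigl(\sum_{i} d_i\bigr)^{2^d} = 0$, and because $\mathbb{F}_{2^m}$ is a field, it has no non-zero nilpotents, so this forces $\sum_{i} d_i = 0$, i.e.\ $\mathrm{trace}(D_1) = 0$. The identical computation applied to $D_2$, using $D_2^{2^d} = k_2 I$, yields $\mathrm{trace}(D_2) = 0$.

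I do not anticipate a genuine obstacle here: the entire content is already isolated in Theorem \ref{circ semi-inv}, which guarantees the scalar-matrix property of the associated diagonal matrices for the semi-involutory case just as Theorem \ref{circ semi-ortho} did for the semi-orthogonal case. The only point requiring care is that the exponent coincides with the matrix order $2^d$, which is precisely what lets the Frobenius identity collapse the sum of $2^d$-th powers into a single $2^d$-th power; for a general even order this coincidence fails, which is exactly why the other theorems of this section must invoke the additional MDS hypothesis rather than obtaining the trace-zero conclusion unconditionally.
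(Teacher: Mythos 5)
Your proposal is correct and is exactly the paper's argument: the paper proves Proposition \ref{tr 0 2^d si} by simply remarking that the proof follows as in Proposition \ref{tr 0 2^d s.o}, i.e.\ invoke Theorem \ref{circ semi-inv} to get $D_1^{2^d}=k_1I$, $D_2^{2^d}=k_2I$, take traces, and collapse $\sum_i d_i^{2^d}=\bigl(\sum_i d_i\bigr)^{2^d}=0$ via the Frobenius identity in characteristic $2$. Your closing observation about why the exponent matching the order $2^d$ is essential is also accurate and explains why the other cases need the MDS hypothesis.
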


\begin{proof}
The proof follows similarly as Theorem \ref{tr 0 2^d s.o}.\end{proof}

For circulant semi-involutory matrices with orders other than $2^d \times 2^d$, our result establishes that the trace value of the associated diagonal matrices is zero when the matrix exhibits the MDS property.

\begin{theorem}
Let $A$ be an $n \times n, n \geq 3, n \neq 2^i$ circulant, semi-involutory matrix over the finite field $\mathbb{F}_{2^m}$ with associated diagonal matrices $D_1$ and $D_2$. Then $A$ is MDS implies both the matrices $D_1$ and $D_2$ have trace zero.

\end{theorem}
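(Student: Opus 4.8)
The plan is to mimic the structure of the two semi-orthogonal theorems (for $k \equiv 0$ and $k \equiv 2 \pmod 4$) but to exploit the semi-involutory relation $A^{-1} = D_1 A D_2$ directly rather than the transpose version. First I would write $A = \text{circulant}(a_0, a_1, \ldots, a_{n-1})$ with $D_1 = \text{diagonal}(d_0, \ldots, d_{n-1})$ and $D_2 = \text{diagonal}(e_0, \ldots, e_{n-1})$. By Theorem \ref{circ semi-inv} the associated diagonal matrices satisfy $D_1^n = k_1 I$ and $D_2^n = k_2 I$. From the identity $A A^{-1} = I$, I would substitute $A^{-1} = D_1 A D_2$ to obtain $A D_1 A D_2 = I$. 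Setting $M = A D_1 A D_2$ and reading off the off-diagonal entries $M[i,j] = 0$ produces a family of equations, exactly as in the semi-orthogonal case; the key difference is that the product inside each sum now pairs $a$-entries with $D_1$-entries (the $d$'s) rather than only the $D_2$-entries, so the two associated diagonal matrices enter more symmetrically.

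Next I would collect the equations coming from a fixed off-diagonal ``distance'' $t$, i.e.\ the entries $M[i, i+t] = 0$ for $i = 0, 1, \ldots, n-1$, clear the nonzero scalar $e_i$ or $d_i$ multiplying each, and sum over $i$ to obtain a relation of the shape $\left(\sum_{i=0}^{n-1} a_i a_{i+t}\right)\left(\sum_{i=0}^{n-1} d_i\right) = 0$ for each relevant $t$. Because $n$ is not a power of $2$, I would split into the two parity cases just as the paper does: when $4 \mid n$ the clean summation over odd distances $t = 1, 3, \ldots, \tfrac{n}{2}-1$ collapses, via the circulant determinant formula \eqref{det circulant}, into the product $(a_0 + a_2 + \cdots)(a_1 + a_3 + \cdots)(\sum_i d_i) = 0$; when $n \equiv 2 \pmod 4$ I anticipate needing the extra ``non-periodic'' hypothesis on one diagonal matrix to handle the middle distance $t = n/2$, precisely where the $e_{i} + e_{i+n/2}$ (resp.\ $d_i + d_{i+n/2}$) terms appear and must be nonzero. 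For odd $n$ the argument would instead use that $n$ itself is odd together with $D^n = kI$ forcing $\text{trace}(D) = 0$ through the Frobenius identity $\left(\sum_i d_i\right)^{2^m \cdot (\cdot)}$-type collapse, though here the finer submatrix/MDS argument is what actually drives the vanishing.

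The crucial final step is invoking the MDS hypothesis: the two even-indexed and odd-indexed sub-sums are first rows of genuine $\tfrac{n}{2} \times \tfrac{n}{2}$ circulant submatrices of $A$, whose nonsingularity (guaranteed by MDS via \eqref{det circulant}) forces each factor $(a_0 + a_2 + \cdots)$ and $(a_1 + a_3 + \cdots)$ to be nonzero, leaving $\sum_i e_i = 0$, i.e.\ $\text{trace}(D_2) = 0$; the symmetric computation starting from $A^{-1} A = I$ yields $\text{trace}(D_1) = 0$.

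I expect the main obstacle to be the combinatorial bookkeeping of the index shifts modulo $n$ when $n$ is a general non-power-of-two integer, rather than the two clean residue classes $0$ and $2 \pmod 4$ treated separately in the semi-orthogonal section. In particular, verifying that summing the distance-$t$ equations over all odd $t$ up to $n/2$ genuinely telescopes into the product of the two half-size circulant row-sums, with no leftover cross terms and with the Frobenius-power identity $\left(\sum d_i\right)^{2^d} = \sum d_i^{2^d}$ applied correctly in characteristic $2$, is where the real care is needed; the MDS invocation itself is then immediate from the determinant formula.
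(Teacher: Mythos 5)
There is a genuine gap, and it lies in the central computational step. For the semi-involutory relation the matrix to expand is $M = AD_1AD_2$, whose off-diagonal entries are $M[i,i+t] = \bigl(\sum_{s} a_{s}a_{t-s}d_{i+s}\bigr)e_{i+t}$ (indices mod $n$); after clearing the $e$'s and summing over $i$ you obtain $\bigl(\sum_{s} a_{s}a_{t-s}\bigr)\bigl(\sum_i d_i\bigr)=0$, \emph{not} $\bigl(\sum_i a_i a_{i+t}\bigr)\bigl(\sum_i d_i\bigr)=0$. The first factor is a convolution coefficient of $A$ with itself, not the autocorrelation that appears in the semi-orthogonal case (where $M = AD_2A^{T}D_1$). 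This distinction is fatal to your plan of summing over odd distances $t$: when $n$ is even and $t$ is odd, the congruence $2s\equiv t\pmod n$ has no solution, so the map $s\mapsto t-s$ pairs all terms of $\sum_s a_sa_{t-s}$ without fixed points and the sum is identically $0$ in characteristic $2$. Every odd-distance equation is then the vacuous identity $0=0$, and no amount of summing and factoring via the circulant determinant formula (\ref{det circulant}) will produce the product $(a_0+a_2+\cdots)(a_1+a_3+\cdots)\bigl(\sum d_i\bigr)$. The paper instead uses the single \emph{even} distance $t=2$: there the fixed points of $s\mapsto 2-s$ are exactly $s=1$ (and also $s=1+n/2$ when $n$ is even), so after the pairwise cancellation one is left with $a_1^2\,\mathrm{trace}(D_1)=0$ for odd $n$ and $(a_1^2+a_{n/2+1}^2)\,\mathrm{trace}(D_1)=0$ for even $n$; the MDS hypothesis then enters only through a $1\times 1$ minor ($a_1\neq 0$) or the $2\times 2$ minor on rows $0,n/2$ and columns $1,n/2+1$, whose determinant is $a_1^2+a_{n/2+1}^2$. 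No case split modulo $4$ and no non-periodicity hypothesis is needed, which is consistent with the theorem statement carrying no such hypothesis.

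Your treatment of odd $n$ is also not viable as written: from $D_1^n=k_1I$ with $n$ odd one gets $\mathrm{trace}(D_1^n)=nk_1=k_1\neq 0$, and since $n$ is not a power of $2$ there is no Frobenius identity relating $\mathrm{trace}(D_1^n)$ to $\bigl(\mathrm{trace}(D_1)\bigr)^n$; indeed the paper's odd-order semi-orthogonal examples show the trace need not vanish without MDS. The correct odd-$n$ argument is again the distance-$2$ computation above. (A further small slip: with $M=AD_1AD_2$ the first pass yields $\mathrm{trace}(D_1)=0$, and one must rerun the argument from $A^{-1}A=I$ to get $\mathrm{trace}(D_2)=0$; your write-up concludes $\sum_i e_i=0$ from equations involving $\sum_i d_i$.)
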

\begin{proof}
Let $A=$ circulant$(a_0,a_1,a_2,\hdots,a_{n-1})$. Since $A$ is semi-involutory, we have $A^{-1} = D_1AD_2,$ where $D_1$ and $D_2$ are non-singular diagonal matrices given by $D_1=$ diagonal$(d_0,d_1,d_2,\hdots,d_{n-1})$ and $D_2=$ diagonal$(e_0,e_1,e_2,\hdots,e_{n-1})$. 

Let $A$ be an MDS matrix. Since $AA^{-1}=I$, we have $AD_1AD_2=I$. Let $M=AD_1AD_2$.  This implies that all non-diagonal entries of $M$ are $0$. 

\textbf{Case I:} Consider the case $n$ is even, $n=2k$.

From the entries $M[0,2], M[1,3], M[2,4], M[3,5],\hdots, M[2k-3,2k-1], M[2k-2,0], M[2k-1,1]$ we get the following equations:
\begin{align*}
(a_1^2d_1+&a_{k+1}^2d_{k+1}+a_0a_2(d_0+d_2)+a_3a_{2k-1}(d_3+d_{2k-1})+\cdots+a_ka_{k+2}(d_k+d_{k+2}))e_2=0\\
(a_1^2d_2+&a_{k+1}^2d_{k+2}+a_0a_2(d_1+d_3)+a_3a_{2k-1}(d_4+d_{2k})+\cdots+a_ka_{k+2}(d_{k+1}+d_{k+3}))e_3=0\\
&\vdots\\
(a_1^2d_{2k-2}&+a_{k+1}^2d_{k-2}+a_0a_2(d_{2k-1}+d_{2k-3})++a_3a_{2k-1}(d_{3+(2k-3)}+d_{2k-4})+\cdots+a_ka_{k+2}\\&(d_{k-1}+d_{k-3}))e_{2k-1}=0\\
(a_1^2d_{2k-1}&+a_{k+1}^2d_{k-1}+a_0a_2(d_{2k}+d_{2k-2})+a_3a_{2k-1}(d_1+d_{2k-3})+\cdots+a_ka_{k+2}\\&(d_{k-2}+d_{k}))e_0=0\\
(a_1^2d_{2k}&+a_{k+1}^2d_{k}+a_0a_2(d_{1}+d_{2k-1})+a_3a_{2k-1}(d_2+d_{2k-2})+\cdots+a_ka_{k+2}\\&(d_{k-1}+d_{k+1}))e_1=0.
\end{align*}
All the suffixes are calculated modulo $2k$. Since $e_i$'s are non-zero, adding all these equations, we get  
\begin{eqnarray}\label{fin eq 3}
  (a_1^2+a_{k+1}^2)(d_1+d_2+\cdots+ d_{2k-1})=0
\end{eqnarray}
Since $A$ is an MDS matrix, all its submatrices have determinant non-zero. Consider the $2 \times 2$ submatrix of $A$ with the positions $A[0,1], A[0,k+1], A[k,1], A[k,k+1]$. The determinant of this submatrix is $(a_1^2+a_{k+1}^2)$ and thus it is non-zero. Consequently Equation (\ref{fin eq 3}) implies $(d_1+d_2+\cdots+ d_{2k-1})=0$. Therefore trace of $D_1$ is zero.

Considering the identity $A^{-1}A=I$ and proceed similarly, we will get trace of $D_2$ is zero.


\textbf{Case II:} Consider the case $n$ is odd, $n=2k+1$.

The entries at the positions $M[0,2], M[1,3], M[2,4], M[3,5], \hdots, M[2k-2,2k], M[2k-1,0], M[2k,2]$ give the following equations:
\begin{equation*}
    \begin{aligned}
(a_1^2d_1+&a_0a_2(d_0+d_2)+a_3a_{2k}(d_3+d_{2k})+\cdots+a_{k+1}a_{k+2}(d_{k+1}+d_{k+2}))e_2=0\\
(a_1^2d_2+&a_0a_2(d_1+d_3)+a_3a_{2k}(d_4+d_{2k+1})+\cdots+a_{k+1}a_{k+2}(d_{k+2}+d_{k+3}))e_3=0\\
&\vdots\\
(a_1^2d_{2k-1}&+a_0a_2(d_{0+2k-2}+d_{1+2k-2})+a_3a_{2k}(d_{3+2k-2}+d_{2k+2k-2})+\cdots+a_{k+1}a_{k+2}\\&(d_{k+1+2k-2}+d_{k+2++2k-2}))e_{2k-1}=0\\
(a_1^2d_{2k}&+a_0a_2(d_{0+2k-1}+d_{2+2k-1})+a_3a_{2k}(d_{3+2k-1}+d_{2k+2k-1})+\cdots+a_{k+1}a_{k+2}\\&(d_{k+1+2k-1}+d_{k+2+2k-1}))e_0=0\\
(a_1^2d_0&+a_0a_2(d_1+d_{2k})+a_3a_{2k}(d_2+d_{2k-1})+\cdots+a_{k+1}a_{k+2}(d_{k}+d_{k+1}))e_1=0.\\
\end{aligned}
  \end{equation*}
All the suffixes are calculated modulo $2k+1$. Since $e_i$'s are non-zero, adding all these equations, we get  
  \begin{eqnarray}
  (a_1^2)(d_1+d_2+\cdots+d_{2k})=0
  \end{eqnarray}
  Since $A$ is an MDS matrix, all entries of $A$ are non-zero. This implies $(d_1+d_2+\cdots+d_{2k})=0$. Therefore trace of $D_1$ is zero.

Considering the identity $A^{-1}A=I$ and proceed similarly, we will get trace of $D_2$ is zero. 

\end{proof}

\begin{remark}
For circulant semi-involutory matrices over the finite field $\mathbb{F}_{2^m}$, we have proven that matrices of order $2^d \times 2^d$ exhibit a zero trace for their associated diagonal matrices. Furthermore, for orders not represented as powers of $2$, the trace remains zero if the matrix possesses the MDS property.
\end{remark}

\section{Conclusion} 
In conclusion, this article has explored circulant matrices with both semi-orthogonal and MDS properties, as well as circulant matrices characterized by semi-involutory and MDS attributes. We have presented examples of circulant semi-orthogonal MDS matrices for certain odd orders. Moreover, our results open a new research direction to investigate circulant MDS matrices with both semi-involutory and semi-orthogonal properties. This is particularly significant given the comprehensive examination of circulant involutory and orthogonal matrices in existing literature.

\bibliographystyle{plain}

\end{document}